\newtheorem{theorem}{Theorem}
\newtheorem{corollary}[theorem]{Corollary}
\newtheorem{lemma}[theorem]{Lemma}
\newenvironment{proof}[1][Proof]{\noindent\textbf{#1.} }{\ \rule{0.5em}{0.5em}}
\begin{document}
\title{Control by quantum dynamics on graphs}
\author{Chris Godsil}
\affiliation{Department of Combinatorics \& Optimization, University of Waterloo, N2L 3G1
Waterloo, ON, Canada}
\author{Simone Severini}
\affiliation{Department of Physics and Astronomy, University College London, WC1E 6BT
London, United Kingdom}

\begin{abstract}
We address the study of controllability of a closed quantum system whose
dynamical Lie algebra is generated by adjacency matrices of graphs. We
characterize a large family of graphs that renders a system controllable. The
key property is a novel graph-theoretic feature consisting of a particularly
\emph{disordered} cycle structure. Disregarding efficiency of control
functions, but choosing subfamilies of sparse graphs, the results translate
into continuous-time quantum walks for universal computation.

\end{abstract}
\maketitle

\section{Introduction}

The study of classical control theory is ubiquitous across engineering
disciplines. The concept of controllability at the quantum level is a
fundamental notion which expresses the ability of implementing \emph{any}
dynamics in a given quantum mechanical set-up (see \cite{dalessandro} for an
introductory monograph). From the practical point of view, the successful
realization of quantum devices for a variety of information processing tasks
strongly depends on the ability of manipulating systems with sufficient
freedom. The design of protocols to control closed quantum systems mainly
deals with schemes for efficient controllability by acting on subspaces
\cite{daniel}.

A large number of systems have been shown to exhibit characteristics that
allow controllability. For example, almost any pair of Hamiltonians that can
be coherently applied to a finite-dimensional quantum system renders it
controllable, and almost any quantum logic gate is universal \cite{lloyd1}.
However, controllability of large systems does not directly give an efficient
implementation of control functions, and therefore processes like universal
quantum computation. Moreover, control criteria are generally not computable
for large systems and are not immediately scalable with the system size
\cite{crit}.

Here we consider controllability in relation to a class of dynamics that can
be interpreted as Schr\"{o}dinger evolutions of a particle hopping between the
vertices of a graph. The corresponding Hamiltonians are matrices with nonzero
entries only where transitions are permitted by the graph structure. This
class naturally embraces continuous-time quantum walks (for short, QWs)
\cite{walks}, and the evolution of a single (or multiple \cite{osb})
excitation subspace for systems of spin-half particles with various kind of
interaction (\emph{e.g.}, XY, XYZ, \emph{etc.}) \cite{state}. QWs have been
shown to build gates by scattering off a set of small graphs attached to wires
representing basis states \cite{childs}. QWs give examples of matrices for
sufficient control, when we can arbitrarily modify edge-weights (or,
equivalently, strength of couplings) \cite{carl}.

During our discussion, an $n$-level system is \emph{controllable} by a given
set of Hamiltonians (possibly acting on specific subsystems only) if every
element of the unitary group $U(n)$ can be approximated by the matrices of the
subgroup obtained via the dynamical Lie algebra of the set. This general
definition is useful to isolate the main difference between the notions of
controllability and universality. For instance, global evolution of a spin
system may require complex protocols to implement $2$-qubit gates on distant
sites, even if it permits complete controllability.

The specific problem addressed here is the following one: we study
controllability by the alternate application of two Hamiltonians. One of the
Hamiltonian describes a nearest-neighbour interaction defined by some graph.
The other Hamiltonian is a projector given by the characteristic vector of a
subset of vertices. This describes an interaction between every two spins
associated to the elements of the subset.

Our main result is to characterize a large family of graphs that give a pair
of Hamiltonians implementing any quantum dynamics, thereby rendering a system
controllable (Section II). The result can be seen as an analogue of the
Burgarth-Giovannetti infectivity criterion \cite{daniel1} in this setting
involving two Hamiltonians. A comparison with the infectivity criterion,
possibly with the use of the notion of zero-forcing \cite{zero}, remains an
open question. It is also an open problem to determine the necessity of our condition.

The proofs follow easily from facts of Lie theory and algebraic properties of
graphs. The members of our family present a particularly \emph{disordered
}cycle structure \cite{god}. Specifically we require that the number of cycles
of a certain length, starting from different vertices, cannot be written as a
sum of the numbers of smaller cycles. We will show that this is a property
responsible for controllability, when the Hamiltonian is the adjacency matrix
of the graph. Indeed, we will use powers of the adjacency matrix. These encode
the cycles of a graph (see the definition of a walk matrix below). In analogy
with a known result in quantum control theory of spin systems, the path graph
turns out to be the arguably most simple example \cite{dalessandro, crit}. The
setting is directly equivalent to a single excitation evolving on an XY spin
chain with constant couplings (here XY means $XX+YY$, \emph{\`{a} la} Bose
\cite{bose}). The path is a connected graph with minimal number of edges,
therefore it corresponds to a very sparse Hamiltonian. This is a fact to take
into account, since sparse Hamiltonians can be simulated efficiently in a
quantum computer \cite{be}.

In general, focusing only on the dynamics restricted to the $n$-dimensional
subspace, the physical device for implementation consists of any machine
realizing QWs (\emph{e.g.}, an optical waveguide lattice \cite{pe}). While we
do not modify the circuitry for different tasks, an external clock is
necessary, since we need to know the time of application of each Hamiltonian,
even if the resulting operation is just a phase factor. We will give evidence
that our property is almost sure (Section II). This is parallel to the fact
that almost every (generic) Hamiltonian gives sufficient control. For many
types of graphs, the construction of an infinite family with a typical
property may not be straightforward (\emph{e.g.}, expanders, small diameter
graphs, \emph{etc.}). We will present a method to construct infinite families
of graphs (without special constraints) that satisfy our required property
(Section III).

The results of the paper may translate into valuable information in the
perspective of designing schemes for scalable quantum computation via local
control (\emph{e.g.}, help in the selection of the systems, the engineering of
control functions, \emph{etc.}). (See \cite{daniel, crit} for extended
treatments of this topic.) From a wider angle, the results consist of a step
towards a better and more general understanding quantum evolution on networks.
Additionally, we introduce concepts that propose an interface between control
theory and graph theory.

The paper has four further sections: Section II contains the general result;
Examples are in Section III; we draw some brief conclusions and state open
problems in Section IV.

\section{Controllability}

Let $X=(V(X),E(X))$ be a (simple) \emph{graph} with a set of $n$ vertices
$V(X)$ and a set of edges $E(X)\subseteq V(X)\times V(X)-\{\{i,i\}:i\in
V(X)\}$. The \emph{adjacency matrix} of $X$, denoted by $A(X)$, is an $n\times
n$ matrix with $A(X)_{i,j}=1$ if $\{i,j\}\in E(X)$ and $A(X)_{i,j}=0$,
otherwise. Let $X$ be a graph on $n$ vertices and let $z\in\mathbb{R}^{n}$. We
define and denote by $W_{z}(X)=\left(
\begin{array}
[c]{cccc}%
z & A(X)z & \ldots & A(X)^{n-1}z
\end{array}
\right)  $ an $n\times n$ matrix with entries in $\mathbb{Z}^{\geq0}$
associated to $X$. When $z$ is the characteristic vector of some set
$S\subseteq V(X)$, the matrix $W_{z}(X)$ is called a \emph{walk matrix} of $X$
with respect to $S$. In this case, we may write $W_{S}(X)$ instead of
$W_{z}(X)$. The pair $\left(  X,z\right)  $ is said to be \emph{controllable}
if the matrix $W_{z}(X)$ is invertible (\emph{i.e.}, $\det(W_{z}(X))\neq0$).
When $z$ is the characteristic vector of some set $S\subseteq V(X)$, we may
write $\left(  X,S\right)  $ instead of $\left(  X,z\right)  $. The walk
matrix of a graph contains information about its cycle structure \cite{ha}.
The definition of a controllable graph arises as a special case of a
controllable pair. Let $X$ be a graph and let $\mathbf{1}$ be the all-ones
vector. This is also the characteristic vector of $V(X)$. The graph $X$ is
said to be \emph{controllable} if $\left(  X,\mathbf{1}\right)  $ (or,
equivalently, $\left(  X,V(X)\right)  $) is controllable.

Let us recall that a \emph{walk of length} $l$ in a graph $X$ is a sequence of
vertices $1,2,...,l,l+1$, such that $\{i,i+1\}\in E(X)$, for every $1\leq
i\leq l$. The $ij$-th entry of the walk matrix, $[W_{\mathbf{1}}%
(X)]_{i,j}=\sum_{j=1}^{n}A_{i,j}^{l-1}(X)$, counts the number of all walks of
length $l-1$ from vertex $i$. Let $d(i):=|\{j:\{i,j\}\in E(X)\}|$ be the
\emph{degree }of a vertex $i$. A graph $X$ is \emph{regular} if $d(i)$ is
constant over $V(X)$. Notice that a controllable graph can not be regular. In
fact, the walk matrix of a regular graphs has rank $1$, because $\mathbf{1}$
is one of its eigenvectors. One can verify by exhaustive search that there are
no controllable graphs on $n\leq5$ vertices. Fig. 1 below contains drawings of
all connected (non-isomorphic) controllable graphs on six vertices. Numerics
show that the ratio [number of graphs]/[number of controllable graphs]
decreases with $n$ (see caption of Fig. 1 for small examples). We expect that
asymptotically almost surely every graph is controllable, also considering
that the authomorphisms fixing the vertices of a controllable graphs are
trivial.
\begin{figure}
[h]
\begin{center}
\includegraphics[
height=1.0848in,
width=2.6769in
]%
{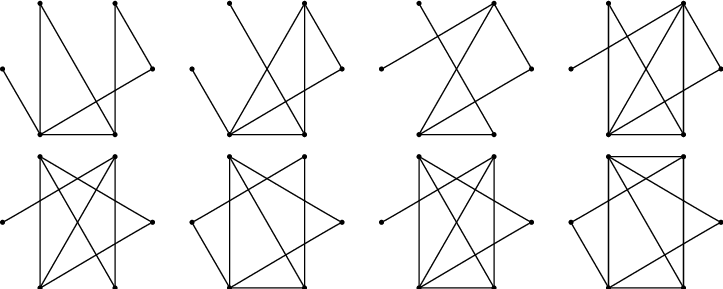}%
\caption{Drawings of all connected non-isomorphic controllable graphs on six
vertices. The vector $\left(  d(i\right)  :i\in V(X))$ is the \emph{degree
sequence} of $X$. The degree sequences of these graphs are particularly
\emph{irregular}. First line, from left to right:
(1,2,2,2,3,4),(1,1,2,3,3,4),(1,1,2,2,3,3), and (1,2,2,3,4,4). Second line,
from left to right: (1,2,2,3,3,3), (2,2,2,3,3,4), (1,2,3,3,3,4), and
(2,2,3,3,4,4). There are exactly $8,85,2275$ (connected) controllable graphs
on $6,7$, and $8$ vertices, respectively.}%
\end{center}
\end{figure}

\bigskip

A \emph{(continuous-time) quantum walk} on a graph $X$, starting from a state
$|\psi_{0}\rangle\in\mathbb{C}^{n}$, is the process induced by the rule
$U_{M(X)}(t)|\psi_{0}\rangle\mapsto|\psi_{t}\rangle$, where $U_{M(X)}%
(t):=e^{-iM(X)t}$ ($t\in\mathbb{R}^{+}$) and $M(X)$ is a symmetric matrix with
nonzero entries corresponding to the edges of $X$ (\emph{e.g.}, adjacency
matrix, combinatorial Laplacian, \emph{etc.}). A probability distribution
supported by $V(X)$ is obtained by performing a projective measurement on the
state $|\psi_{t}\rangle$. The matrix $M(X)$ can also be seen as governing the
dynamics of a system of spin-half particles restricted to a single excitation
subspace. The dimension of such a subspace is in fact $n$. Here we work with
adjacency matrices only, but the results described are valid for any symmetric
matrix. Studies of perfect state transfer and entanglement transfer in spin
systems are often carried on with respect to this restriction \cite{state}.
QWs and their discrete analogues (\emph{e.g.}, coined quantum walks, scalar
quantum walks, \emph{etc.}) have found a number of algorithmic applications.
The reviews \cite{walks} give a detailed perspective on this and related topics.

When choosing Hamiltonians of the form of $M(X)$, the question to ask about
controllability is the following one: can we obtain any quantum dynamics on an
$n$-level system by performing repeated applications of QWs? Moreover, how
much can we limit our resources (\emph{e.g.}, number of non-null interactions,
number of different Hamiltonians, \emph{etc.})? In particular, can we use just
a single QW (\emph{i.e.}, a fixed graph) plus an extra operation acting on a
subspace of a relatively small dimension? The latter one is linked to core
questions in quantum control theory, where we are interested in driving global
dynamics by directly modifying only a limited portion of the system under a
parsimony criterion. In the quantum mechanical set-up, controllability occurs
together with the ability of constructing with reasonable accuracy any unitary
matrix of the appropriate dimension (see Chapter 3 of \cite{dalessandro}). The
corresponding property is expressed if we guarantee density in $U(n)$ of the
group of unitaries realized as sequences of QWs. The next technical lemma
describes a relation between controllable pairs and Lie algebras.

\begin{lemma}
\label{lem}Let $X$ be a graph and let $z$ be the characteristic vector of a
set $S\subseteq V(X)$. Let us define the symmetric $(0,1)$-matrix $L=zz^{T}$.
If $(X,S)$ is a controllable pair then the real Lie algebra generated by the
matrices $A(X)$ and $L$ is \emph{Mat}$_{n\times n}(\mathbb{R})$, the algebra
of all $n\times n$ real matrices. The real Lie algebra generated by $iA(X)$
and $iL$ is the vector space of all skew-Hermitian matrices.
\end{lemma}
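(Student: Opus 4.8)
The plan is to prove the two statements together, deriving the skew-Hermitian claim as the essentially-formal consequence of the real statement. I would begin with the matrix-algebra side. Let $\mathcal{L}$ be the Lie algebra generated by $A:=A(X)$ and $L=zz^T$, and let $\mathcal{A}$ be the associative algebra generated by the same matrices together with the identity. Since $\mathcal{L}$ contains all iterated commutators, the first task is to show $\mathcal{A}=\mathrm{Mat}_{n\times n}(\mathbb{R})$, and the second is to bootstrap from $\mathcal{A}$ back down into $\mathcal{L}$.

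For $\mathcal{A}=\mathrm{Mat}_{n\times n}(\mathbb{R})$: the controllability hypothesis says $W_z(X)=(z\mid Az\mid\cdots\mid A^{n-1}z)$ is invertible, so the vectors $z,Az,\dots,A^{n-1}z$ form a basis of $\mathbb{R}^n$. The key observation is that for any polynomial $p$, the product $L\,p(A)\,L = (z^Tp(A)z)\,zz^T$ is a scalar multiple of $L$, and more usefully $L\,p(A) = z\,(p(A)^Tz)^T = z\,(p(A)z)^T$ (using $A$ symmetric), so as $p$ ranges over polynomials, $L\,p(A)$ ranges over all matrices of the form $z\,w^T$ with $w$ in the span of $\{A^kz\}$ — that is, over all rank-$\le 1$ matrices $z\,w^T$ with $w\in\mathbb{R}^n$ arbitrary. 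Symmetrically, $p(A)\,L$ gives all $v\,z^T$. Multiplying, $p(A)L\,q(A) = (q(A)z)\,(p(A)z)^T$ wait — more directly, $(p(A)L)(L\,q(A))$ up to a scalar $z^Tz$ gives $(p(A)z)(q(A)z)^T$, and since the vectors $p(A)z$ span $\mathbb{R}^n$ we obtain every rank-one matrix $u v^T$, hence all of $\mathrm{Mat}_{n\times n}(\mathbb{R})$ by taking sums. (One must check the scalars $z^Tz=|S|$ and similar are nonzero, which holds since $S$ is nonempty — if $S=\varnothing$ then $z=0$ and $W_z(X)$ is singular, contradicting controllability.)

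Next, descend from $\mathcal{A}$ to $\mathcal{L}$. Here I would invoke the standard structural fact: $\mathcal{L}$ is an ideal in its associative envelope's Lie algebra, or more elementarily, use that $\mathrm{Mat}_{n\times n}(\mathbb{R}) = \mathbb{R} I \oplus \mathfrak{sl}_n(\mathbb{R})$ and that $\mathfrak{sl}_n(\mathbb{R})$ is simple. Concretely: the set $\mathcal{L}+\mathcal{L}\mathcal{L}+\cdots$ generated under both bracket and... cleaner is to note that since $\mathcal{A}=\mathrm{Mat}_n$, for any $B\in\mathrm{Mat}_n$ we can write $B$ as a sum of products of $A$'s and $L$'s, and $[A, [A, \dots]]$-type expansions show $\mathcal{L}$ contains a large subspace; the honest route is: $\mathcal{L}$ contains $A$ and $L$; a short computation gives that $[L, AL - LA] = LAL\cdot(\text{stuff})$ hmm — the reliable argument is that the associative algebra generated by $\mathcal{L}$ is $\mathrm{Mat}_n$, and a Lie algebra $\mathcal{L}\subseteq\mathfrak{gl}_n(\mathbb{R})$ whose associative envelope is all of $\mathrm{Mat}_n$ must contain $\mathfrak{sl}_n$ (because $\mathcal{L}$ acts irreducibly, and by a theorem on Lie algebras acting irreducibly the derived algebra $[\mathcal{L},\mathcal{L}]$ is semisimple and its envelope is still everything, forcing $[\mathcal{L},\mathcal{L}]\supseteq\mathfrak{sl}_n$); then since $A$ has nonzero trace is false in general — instead $L$ has trace $|S|\neq 0$, so $\mathcal{L}\ni L\notin\mathfrak{sl}_n$, giving $\mathcal{L}=\mathbb{R}I\oplus\mathfrak{sl}_n=\mathrm{Mat}_n$.

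Finally, the skew-Hermitian statement. Writing $\mathfrak{g}$ for the real Lie algebra generated by $iA$ and $iL$: these are skew-symmetric (being $i$ times real symmetric matrices), and the skew-Hermitian matrices $\mathfrak{u}(n)$ form a real Lie algebra containing them, so $\mathfrak{g}\subseteq\mathfrak{u}(n)$. For the reverse, observe that brackets of skew-Hermitian matrices behave so that $\mathfrak{g}$ is spanned, over $\mathbb{R}$, by $i$ times products-with-alternating-commutators matching those in $\mathcal{L}$; more precisely, the map $M\mapsto iM$ sends the real span of all bracket-monomials in $A,L$ — which we showed is $\mathrm{Mat}_n(\mathbb{R})$ — into $\mathfrak{g}$ plus $i\cdot(\text{symmetric products})$, so one runs the rank-one generation argument again inside $\mathfrak{u}(n)$: from $iA, iL$ one builds $i\,uv^T - i\,vu^T$ and $i(uv^T+vu^T)$ type elements spanning all of $\mathfrak{u}(n)$. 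The cleanest phrasing: $\mathfrak{u}(n)$ as a real vector space equals $\{iM: M\in\mathrm{Mat}_n(\mathbb{R})\text{ symmetric}\}\oplus\{M: M\in\mathrm{Mat}_n(\mathbb{R})\text{ skew}\}$, has real dimension $n^2$, and the Lie algebra generated by any set whose real associative envelope (now over $\mathbb{C}$, or tracking real/imaginary parts) is everything must be all of $\mathfrak{u}(n)$; this is the standard "controllability $\Rightarrow$ $\mathfrak{u}(n)$" lemma from quantum control that follows once the associative span is full.

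The main obstacle is the descent step from the associative algebra to the Lie algebra: generating $\mathrm{Mat}_n(\mathbb{R})$ associatively is the easy, graph-theoretic heart of the argument (it is exactly where controllability of the pair enters), but showing the Lie algebra — closed only under brackets — recovers the full matrix algebra requires either the irreducibility/simplicity machinery sketched above or an explicit sequence of nested commutators, and getting the trace bookkeeping right (that $L$ contributes the needed non-traceless direction) is the delicate point. Everything else is routine linear algebra and the observation that $L\,p(A)L$, $L\,p(A)$, $p(A)L$ produce rank-one matrices with one factor free to range over a basis.
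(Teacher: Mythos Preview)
Your argument has a genuine gap at the ``descent'' step. You assert that a Lie subalgebra $\mathcal{L}\subseteq\mathfrak{gl}_n(\mathbb{R})$ whose associative envelope is all of $\mathrm{Mat}_n(\mathbb{R})$ must contain $\mathfrak{sl}_n(\mathbb{R})$, but this is false. Take $\mathcal{L}=\mathfrak{so}(3)$: it acts irreducibly on $\mathbb{R}^3$, its centralizer in $\mathrm{Mat}_3(\mathbb{R})$ is $\mathbb{R}I$, and one checks directly that products of skew-symmetric $3\times 3$ matrices generate every $E_{ij}$, so the associative envelope is all of $\mathrm{Mat}_3(\mathbb{R})$. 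Yet $\mathfrak{so}(3)$ is $3$-dimensional and certainly does not contain the $8$-dimensional $\mathfrak{sl}_3(\mathbb{R})$. Your appeal to ``the derived algebra is semisimple and its envelope is still everything, forcing $[\mathcal{L},\mathcal{L}]\supseteq\mathfrak{sl}_n$'' fails on exactly this example. So showing that the \emph{associative} algebra generated by $A$ and $L$ is $\mathrm{Mat}_n(\mathbb{R})$ --- which is indeed the easy, graph-theoretic observation that $p(A)Lq(A)=(p(A)z)(q(A)z)^T$ ranges over all rank-one matrices --- does not by itself give the Lie-algebra statement.

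The paper sidesteps this entirely: rather than passing through the associative envelope, it proves directly by induction on $k$ that every monomial $A^{k-i}LA^{i}$ already lies in the Lie algebra $\mathcal{L}$. The induction uses the identities $LA^rL=c_rL$ (with $c_r=z^TA^rz$) and $L^2=c_0L$ with $c_0=|S|\neq 0$: bracketing the inductive hypothesis with $A$ and telescoping gives $A^{k+1}L-LA^{k+1}\in\mathcal{L}$; bracketing that with $L$ gives $2c_{k+1}L-c_0(A^{k+1}L+LA^{k+1})\in\mathcal{L}$; and since $L\in\mathcal{L}$ and $c_0\neq 0$ one extracts $A^{k+1}L$ and $LA^{k+1}$ separately, closing the induction. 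This is precisely the missing idea in your sketch --- an explicit commutator computation that keeps everything inside $\mathcal{L}$ rather than an abstract structural descent. For the second claim, the paper also avoids the vague ``standard control-theory lemma'' you invoke: it observes that brackets of length $r$ in the symmetric generators $A,L$ are symmetric for $r$ even and skew-symmetric for $r$ odd, so the first part forces the even-weight brackets to span all symmetric matrices and the odd-weight brackets to span all skew-symmetric matrices; replacing $A,L$ by $iA,iL$ multiplies a degree-$r$ bracket by $i^{r+1}$, and the two pieces then assemble into all of $\mathfrak{u}(n)$.
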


\begin{proof}
We prove by induction on $k$ that the Lie algebra generated by $A=A(X)$ and
$L$ contains the matrices $A^{k-i}LA^{i}$, with $i=0,\ldots,k$. The first
claim in the lemma will follow at once from this. We note that there are
integers $c_{r}$ such that $LA^{r}L=c_{r}L$. If our Lie algebra contains the
matrices $A^{k-i}LA^{i}$, then it contains the Lie products $A^{k+1-i}%
LA^{i}-A^{k-i}LA^{i+1}$, with $i=0,\ldots,k$, and the partial sums
$A^{k+1-i}LA^{i}-LA^{k+1}$, for all $i$. In particular, it contains
$A^{k+1}L-LA^{k+1-i}$ and therefore also
\begin{align*}
&  LA^{k+1}L-L^{2}A^{k+1}-A^{k+1}L^{2}+LA^{k+1}L\\
&  =2c_{k+1}L-c_{0}(A^{k+1}L+LA^{k+1}).
\end{align*}
for all $i$. From this, it follows that it contains $LA^{k+1}$, and therefore
all the monomials $A^{k+1-i}LA^{i}$. Let us now consider the second claim of
the lemma. We say a matrix is a \emph{commutator of degree} $r+1$, if it can
be written as $AX-XA$ or $LX-XL$ for some commutator of degree $r$, where the
commutators of degree zero are the matrices in the span of $A$ and $L$. Since
$A$ and $L$ are symmetric, we see that all commutators of even weight are
symmetric and all commutators of odd weight are skew-symmetric. The
intersection of the space of symmetric matrices with the space of
skew-symmetric matrices is the zero subspace, from which we deduce that the
even-weight commutators span the space of real symmetric matrices and the
odd-weight commutators span the space of skew-symmetric matrices. This implies
that the even-weight commutators in $iA$ and $iL$ span the space of skew
symmetric matrices, with dimension $(n^{2}-n)/2$, while the odd-weight
commutators span a complementary space of dimension $(n^{2}+n)/2$. This proves
our second claim.
\end{proof}

\bigskip

We remark that it is not hard to show that the matrices
\[%
\begin{tabular}
[c]{lll}%
$\left(
\begin{array}
[c]{cc}%
0 & 1\\
0 & 0
\end{array}
\right)  $ & and & $\left(
\begin{array}
[c]{cc}%
0 & 0\\
1 & 0
\end{array}
\right)  $%
\end{tabular}
\ \
\]
generate \emph{Mat}$_{2\times2}(\mathbb{R})$, but the real Lie algebra they
generate is $\mathfrak{s\ell}(2,\mathbb{R})$ rather than $\mathfrak{g\ell
}(2,\mathbb{R})$. Lemma \ref{lem} is reminiscent of the Lie Algebra Rank
Condition in quantum control theory. The following result gives a sufficient
condition to render a system controllable by a QW:

\begin{theorem}
\label{thm:}Let $X$ be a graph and let $z$ be the characteristic vector of a
set $S\subseteq V(X)$. If $(X,S)$ is a controllable pair then the unitary
matrices $U_{A(X)}(s)=e^{-iA(X)s}$ and $U_{L}(t)=e^{-iLt}$, $s,t\in
\mathbb{R}^{\geq0}$, generate a dense subgroup of the unitary group $U(n)$,
$n\geq2$.
\end{theorem}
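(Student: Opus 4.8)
The plan is to obtain the theorem from Lemma \ref{lem} by the standard argument that a set of skew-Hermitian matrices generating $\mathfrak{u}(n)$ as a Lie algebra generates a dense subgroup of $U(n)$. Lemma \ref{lem} already supplies the infinitesimal content we need: the real Lie algebra generated by $iA(X)$ and $iL$ is the whole space of skew-Hermitian matrices, i.e.\ $\mathfrak{u}(n)$, the Lie algebra of the compact connected group $U(n)$. So the combinatorial hypothesis has been converted into a Lie-algebraic one, and what remains is to integrate it up to the group.

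Concretely, I would let $G\le U(n)$ be the subgroup generated by the unitaries $U_{A(X)}(s)$ and $U_{L}(t)$ with $s,t\ge0$, and let $H=\overline{G}$ be its closure. By Cartan's closed-subgroup theorem $H$ is an embedded Lie subgroup of $U(n)$, with Lie algebra $\mathfrak{h}\subseteq\mathfrak{u}(n)$. Since $G$ is a group it contains the inverses $U_{A(X)}(s)^{-1}=e^{iA(X)s}$ and $U_{L}(t)^{-1}=e^{iLt}$, so $H$ contains the full one-parameter subgroups $\{e^{-iA(X)s}:s\in\mathbb{R}\}$ and $\{e^{-iLt}:t\in\mathbb{R}\}$, and therefore $iA(X),iL\in\mathfrak{h}$. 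Because $\mathfrak{h}$ is a Lie subalgebra it contains the Lie algebra generated by $iA(X)$ and $iL$, which by Lemma \ref{lem} is all of $\mathfrak{u}(n)$. Thus $\mathfrak{h}=\mathfrak{u}(n)$, so $H$ is open in $U(n)$, and since $U(n)$ is connected $H=U(n)$; that is, $G$ is dense. (The hypothesis $n\ge2$ only rules out degenerate small cases and plays no role in the argument.)

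The one place where some care is needed is the restriction to nonnegative times $s,t$, since a priori the forward-time evolutions generate only a \emph{semigroup}, not a group; the hard part is therefore to see that this semigroup is already dense. This follows from compactness of $U(n)$: the closure of any subsemigroup of a compact group is a subgroup, because for an element $u$ of such a closure the powers $u,u^{2},u^{3},\dots$ have a convergent subsequence, and comparing $u^{n_{k+1}}$ with $u^{n_k}$ along it produces $e$, and then $u^{-1}$, in the closure as well. Hence the closure of the semigroup generated by $\{U_{A(X)}(s):s\ge0\}\cup\{U_{L}(t):t\ge0\}$ is a closed subgroup, to which the previous paragraph applies unchanged. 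Everything else --- Cartan's theorem, that a Lie subalgebra is bracket-closed, and that a closed subgroup with full-dimensional Lie algebra is open and hence all of a connected group --- is routine.
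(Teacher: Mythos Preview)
Your proof is correct and follows essentially the same route as the paper's: take the closure of the generated group, identify its Lie algebra via Lemma~\ref{lem}, and conclude using connectedness of $U(n)$. The paper's argument is a terse version of exactly this; your additional paragraph on the semigroup-versus-group issue for nonnegative times is a careful point the paper leaves implicit.
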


\begin{proof}
Let $G$ be the closed subgroup generated by the given elements. Then it is a
Lie subgroup of $U(n)$, and its tangent space is the Lie algebra generated by
$iA(X)$ and $iL$. Since $X$ is controllable, by Lemma \ref{lem}, this Lie
algebra is the space of all skew-Hermitian matrices, which is the tangent
space to the unitary group $U(n)$. It follows that $G=U(n)$.
\end{proof}

\bigskip

If $S=V(X)$ then $z=\mathbf{1}$ and the Hamiltonian $L=J$, the all-ones
matrix. For its unitary, we have $U_{J}(t)_{k,l}=\frac{1}{n}\left(
n+e^{-int}-1\right)  $ if $k=l$ and $U_{J}(t)_{k,l}=\frac{1}{n}\left(
e^{-nit}-1\right)  $, otherwise. In fact $U_{M}(t)$ is a polynomial in $M$
with degree at most the degree of the minimal polynomial of $M$. Among
adjacency matrices, this is the \emph{fullest}\ possible Hamiltonian. Its
implementation has been discussed in several works \cite{du}. The matrix
$U_{J}(t)$ is essentially the same as the Grover operator used in quantum
search algorithms \cite{ki}. Turning our attention to different characteristic
vectors, we can prove a similar result concerning the path graph. We denote by
$P_{n}$ the \emph{path} of length $n-1$, \emph{i.e.}, the graph on $n$
vertices $\{1,2,...,n\}$ and edges $\{1,2\},\{2,3\},...,\{n-1,n\}$ ($1$ and
$n$ are called \emph{end-vertices}). Weighted paths are often used to model 1D
spin chains. A\ control criterion concerning the global space of 1D spin
chains have been isolated in \cite{crit}. Controllability of these systems and
scalable quantum computation has also been discussed by in \cite{crit}.

\begin{corollary}
\label{cor}Let $P_{n}$ be the path on $n$ vertices. The unitary matrix
$U_{A(P_{n})}(s)=e^{-iA(P_{n})s}$ together with the diagonal unitary matrix
$U_{e_{1}e_{1}^{T}}(t)=e^{-ie_{1}e_{1}^{T}t}=[e^{-it},1,...,1]$,
$s,t\in\mathbb{R}^{\geq0}$, where $e_{1}=(1,0,...,0)^{T}$, generate a dense
subgroup of the unitary group $U(n)$, with $n\geq2$.
\end{corollary}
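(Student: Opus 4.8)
The plan is to deduce this directly from Theorem~\ref{thm:}, so the only thing that needs checking is that the pair $(P_n, \{1\})$ is a controllable pair in the sense of the paper, i.e. that the walk matrix $W_{e_1}(P_n) = \left( e_1 \mid A e_1 \mid \cdots \mid A^{n-1} e_1 \right)$ is invertible, where $A = A(P_n)$. Once this is established, Theorem~\ref{thm:} applied with $z = e_1$ and $L = e_1 e_1^T$ immediately gives that $U_{A(P_n)}(s)$ and $U_{e_1 e_1^T}(t)$ generate a dense subgroup of $U(n)$; and one checks the elementary computation $e^{-i e_1 e_1^T t} = \mathrm{diag}(e^{-it}, 1, \ldots, 1)$ since $e_1 e_1^T$ is the rank-one projector onto the first coordinate. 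So the whole corollary reduces to a single linear-algebra fact about the path.

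First I would exploit the tridiagonal structure of $A = A(P_n)$: its nonzero entries are exactly $A_{k,k+1} = A_{k+1,k} = 1$. The key observation is that $A^{j} e_1$ is supported on vertices $\{1, \ldots, j+1\}$ and, crucially, its entry in coordinate $j+1$ is nonzero (in fact equal to $1$), because the only walk of length $j$ from vertex~$1$ to vertex~$j+1$ is the ``straight'' one $1, 2, \ldots, j+1$. More precisely, one shows by induction on $j$ that $(A^j e_1)_{j+1} = 1$ and $(A^j e_1)_{m} = 0$ for all $m > j+1$. This says that the matrix $W_{e_1}(P_n)$, whose $j$-th column is $A^{j-1} e_1$, is lower triangular with respect to the natural vertex ordering $1 < 2 < \cdots < n$, with all diagonal entries equal to $1$. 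Hence $\det W_{e_1}(P_n) = 1 \neq 0$, so $(P_n, \{1\})$ is a controllable pair.

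An equivalent and perhaps cleaner way to see the same thing: the Krylov subspace $\mathrm{span}\{e_1, A e_1, \ldots, A^{j} e_1\}$ is a strictly increasing chain of subspaces (each step adds the new basis vector $e_{j+1}$ to the span, by the support argument above), so the $n$ vectors $e_1, A e_1, \ldots, A^{n-1} e_1$ are linearly independent and span $\mathbb{C}^n$; equivalently $e_1$ is a cyclic vector for $A$. Either phrasing gives invertibility of the walk matrix.

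The only real ``obstacle'' is making the support/nonvanishing claim about $A^j e_1$ precise; everything else is formal. I do not expect genuine difficulty here, but the induction deserves a clean statement. One sets $v_j := A^j e_1$, notes $v_0 = e_1$, and observes that $(A w)_m = w_{m-1} + w_{m+1}$ (with the convention that out-of-range entries are $0$). If $v_j$ is supported on $\{1,\ldots,j+1\}$ with $(v_j)_{j+1} = 1$, then $v_{j+1} = A v_j$ is supported on $\{1, \ldots, j+2\}$ and $(v_{j+1})_{j+2} = (v_j)_{j+1} + (v_j)_{j+3} = 1 + 0 = 1$, completing the induction. This confirms that $W_{e_1}(P_n)$ is unitriangular in the vertex ordering, hence invertible, and the corollary follows from Theorem~\ref{thm:}.
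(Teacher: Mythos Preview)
Your proof is correct and follows essentially the same route as the paper: reduce to Theorem~\ref{thm:} by showing that $W_{e_1}(P_n)$ is unitriangular, hence invertible. One small slip: since column $j$ of $W_{e_1}(P_n)$ is supported on rows $1,\ldots,j$, the matrix is \emph{upper} triangular in the standard convention, not lower; the paper also observes upper triangularity (and then digresses into Catalan numbers and a growth bound, whereas your induction showing the diagonal entries equal $1$ is the cleaner way to finish).
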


\begin{proof}
We need to prove that $W(P_{n})=\left(
\begin{array}
[c]{cccc}%
e_{1} & A(P_{n})e_{1} & \ldots & A(P_{n})^{n-1}e_{1}%
\end{array}
\right)  $ is invertible. Observe that the first entry of the vector
$A^{l}(P_{n})e_{1}$ is a Catalan number $C_{l/2}=\frac{2}{l+2}\binom{l}{l/2}$
if $l$ is even and zero, otherwise; the second entry behaves similarly, with
$C_{(l+1)/2}=\frac{2}{l+3}\binom{l+1}{\left(  l+1\right)  /2}$, but for $l$
odd \cite{sp}. For example, the first two rows of $W(P_{7})$ are $\left(
1,0,1,0,2,0,5\right)  $ and $\left(  0,1,0,2,0,5,0\right)  $. Moreover, the
matrix $W(P_{n})$ is upper triangular. Since $C_{k}\succ\sum_{i=0}^{k}C_{i}$,
for every $k\geq3$, it follows that the rows of $W\left(  P_{n}\right)  $ are
linearly independent.
\end{proof}

\section{Examples}

Given $S\subseteq V(X)$, the \emph{cone} of $X$ relative to $S$ is the graph
$\widehat{X}_{S}$ such that $V(\widehat{X}_{S})=V(X)\cup\{0\}$, for a new
vertex $0$, and $E(\widehat{X}_{S})=E(X)\cup\{\{0,i\}:i\in S\}$. We denote by
$X\backslash i$ the graph obtained from $X$ by deleting the vertex $i$ and all
its incident edges.

\begin{theorem}
Given a graph $X$ and a vertex $1\in V(X)$, the pair $(\widehat{X}_{1},\{0\})$
is controllable if $(X,\{1\})$ is.
\end{theorem}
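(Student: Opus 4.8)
The plan is to pass to the $(n+1)\times(n+1)$ adjacency matrix $B:=A(\widehat{X}_1)$, put it in block form relative to the new vertex $0$, and show directly that the walk matrix $W_{\{0\}}(\widehat{X}_1)=\bigl(e_0\mid Be_0\mid\cdots\mid B^{n}e_0\bigr)$ is invertible, using only the hypothesis that $W_{\{1\}}(X)=\bigl(e_1\mid Ae_1\mid\cdots\mid A^{n-1}e_1\bigr)$ is invertible; here $A:=A(X)$, $n:=|V(X)|$, and $e_0,e_1$ are the appropriate standard basis vectors.

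First I would order the vertices of $\widehat{X}_1$ so that $0$ comes first. Since $0$ is joined only to $1$ and has no loop,
\[
B=\begin{pmatrix}0 & e_1^{T}\\ e_1 & A\end{pmatrix}.
\]
Writing $B^{k}e_0=\bigl(\begin{smallmatrix}a_k\\ v_k\end{smallmatrix}\bigr)$ with $a_k\in\mathbb{R}$ and $v_k\in\mathbb{R}^{n}$, the block multiplication gives $a_0=1$, $v_0=0$, and the recursion $a_{k+1}=(v_k)_1$, $\ v_{k+1}=a_k e_1+Av_k$.

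The heart of the argument is a one-line induction showing that for $1\le k\le n$,
\[
v_k=A^{k-1}e_1+\bigl(\text{an integer combination of }e_1,Ae_1,\dots,A^{k-2}e_1\bigr),
\]
because $v_1=e_1$ and, passing from $v_k$ to $v_{k+1}$, applying $A$ lifts the leading term $A^{k-1}e_1$ to $A^{k}e_1$ while the extra $a_k e_1$ contributes only in degree $0$. Hence $V:=(v_1\mid\cdots\mid v_n)$ equals $W_{\{1\}}(X)$ right-multiplied by a unit upper-triangular integer matrix, so $V$ is invertible. Reading off the columns $B^{k}e_0$ in these coordinates,
\[
W_{\{0\}}(\widehat{X}_1)=\begin{pmatrix}1 & a^{T}\\ 0 & V\end{pmatrix},\qquad a:=(a_1,\dots,a_n)^{T},
\]
with the lower-left block the zero vector of $\mathbb{R}^{n}$; this block upper-triangular matrix has invertible diagonal blocks, hence $\det W_{\{0\}}(\widehat{X}_1)=\det V\neq0$, i.e.\ $(\widehat{X}_1,\{0\})$ is controllable.

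I do not anticipate a genuine obstacle; the only delicate point is the degree bookkeeping in the induction — keeping the leading coefficient of $v_k$ equal to $1$ and the degree at exactly $k-1$ — since that is precisely what forces $v_1,\dots,v_n$ to span the same subspace as $e_1,Ae_1,\dots,A^{n-1}e_1$. An alternative, slightly longer, route is spectral: controllability of $(X,\{1\})$ says $A$ has simple spectrum and $e_1$ is orthogonal to no eigenvector; then examining $Bw=\lambda w$ with $w=(w_0,u)^{T}$ forces $w_0\neq0$, shows that no eigenvalue of $B$ is an eigenvalue of $A$, and pins each eigenspace of $B$ down to dimension one — so $B$ has simple spectrum and $e_0$ meets every eigenvector, which again is controllability.
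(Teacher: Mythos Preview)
Your direct block-matrix argument is correct. The recursion $v_{k+1}=a_k e_1+Av_k$ with $v_1=e_1$ indeed gives $v_k=A^{k-1}e_1+(\text{lower powers})$, so $(v_1\mid\cdots\mid v_n)=W_{\{1\}}(X)\,T$ with $T$ unit upper-triangular, and the block-triangular form of $W_{\{0\}}(\widehat X_1)$ finishes it. The bookkeeping you flag as delicate is genuinely the only thing to watch, and it goes through.

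The paper takes a different route. It first proves a general spectral characterization: $(Z,\{u\})$ is controllable if and only if the characteristic polynomials $\phi(Z,t)$ and $\phi(Z\setminus u,t)$ are coprime (equivalently, every spectral projection $E_\theta e_u$ is nonzero, which forces simple spectrum). Then it uses the cone identity $\phi(\widehat X_1,t)=t\,\phi(X,t)-\phi(X\setminus 1,t)$ to deduce that $\gcd(\phi(X),\phi(X\setminus 1))=1$ implies $\gcd(\phi(\widehat X_1),\phi(X))=1$; since $\widehat X_1\setminus 0=X$, the characterization closes the loop. Your approach is more elementary and self-contained, avoiding spectral decomposition entirely; the paper's approach costs more to set up but yields a reusable coprimality criterion that also underlies the subsequent corollary about attaching a path. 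Your sketched ``alternative spectral route'' is in fact close in spirit to what the paper does, though the paper packages it via characteristic polynomials rather than eigenvector-by-eigenvector analysis of $B$.
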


\begin{proof}
We will show that if $u\in V(Z)$ for some graph $Z$, then $(Z,\{u\})$ is
controllable if and only if the characteristic polynomials (of the adjacency
matrices) $\phi(Z,t)$ and $\phi(Z\backslash u,t)$ are coprime. From the
properties of $\widehat{X}_{1}$, one can prove that $\phi(\widehat{X}%
_{1},t)=t\phi(X,t)-\phi(X\backslash1,t)$. From this, we deduce that if
$\phi(X,t)$ and $\phi(X\backslash1,t)$ are coprime then so are $\phi
(\widehat{X}_{1},t)$ and $\phi(X,t)$. Now we derive our characterization of
controllability. Assume $n=|V(X)|$. Let $e_{1}$ be the first vector of the
standard basis, and let $E_{\theta}$ denote the idempotent in the spectral
decomposition of $A=A(X)$ that corresponds to $\theta$ (see \cite{gr}, pp.
186--187) it follows that%
\[
\frac{\phi(X\backslash1,t)}{\phi(X,t)}=[(tI-A)^{-1}]_{1,1}=\sum_{\theta
}(t-\theta)^{-1}e_{1}^{T}E_{\theta}e_{1}.
\]
We observe that the number of poles in the rational function here is equal to
the number of eigenvalues $\theta$ such that $e_{1}^{T}E_{\theta}e_{1}\neq0$;
in other words, it is equal to the number of eigenvalues $\theta$ such that
the projection $E_{\theta}e_{1}\neq0$. Note also that this number is $n$ if
and only if $\phi(X,t)$ and $\phi(X\backslash1,t)$ are coprime. To complete
the argument, consider the walk matrix $W_{e_{1}}(X)$. By spectral
decomposition (again) $A^{r}e_{1}=\sum_{\theta}E_{\theta}e_{1}$, from which it
follows that the column space of $W_{e_{1}}(X)$ lies in the span of the
nonzero vectors $E_{\theta}e_{1}$. Since each projection $E_{\theta}$ is a
polynomial in $A$, we conclude that rk$(W)$ is equal to the number of
eigenvalues $\theta$ of $X$ such that $E_{\theta}e_{1}\neq0$. This proves our characterization.
\end{proof}

\bigskip

A method based on the theorem can be used to construct infinite families of
controllable graphs:

\begin{corollary}
Let $(X,S)$ be a controllable pair. If $Y$ is the graph obtained by joining
one end-vertex of the path to each vertex in $S$, then $Y$ is controllable.
\end{corollary}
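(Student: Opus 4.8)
The plan is to reduce the statement to the coprimality characterization obtained in the proof of the preceding theorem: for a graph $Z$ and a vertex $u$, the pair $(Z,\{u\})$ is controllable if and only if $\phi(Z,t)$ and $\phi(Z\backslash u,t)$ are coprime. Write $z$ for the characteristic vector of $S$ and order the attached path as $p_0,p_1,\dots,p_k$, with $p_0$ the end-vertex joined to every vertex of $S$; let $Y_j$ be the graph consisting of $X$ together with the segment $p_0,\dots,p_j$, so that $Y_0=\widehat{X}_S$ and $Y=Y_k$. I read the conclusion as the assertion that $(Y,\{p_k\})$ is a controllable pair, where $p_k$ is the end-vertex of the path not joined to $S$; this is what the cone theorem produces and what makes the statement correct (for $X$ a single vertex it reduces to Corollary~\ref{cor}, while $(Y,\mathbf{1})$ is certainly not controllable in general).

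The base case is the set-valued analogue of the theorem's cone step: if $(X,S)$ is controllable then $\phi(\widehat{X}_S,t)$ and $\phi(X,t)$ are coprime. A block-determinant (Schur complement) computation gives $\phi(\widehat{X}_S,t)=t\,\phi(X,t)-z^{T}\mathrm{adj}(tI-A(X))\,z$ (with $\mathrm{adj}$ the classical adjoint), and since a controllable pair forces $A(X)$ to have simple spectrum, $\mathrm{adj}(tI-A(X))=\sum_{\theta}\bigl(\prod_{\mu\neq\theta}(t-\mu)\bigr)E_{\theta}$. Evaluating at an eigenvalue $\theta_0$ of $X$ annihilates $\phi(X,\theta_0)$ and every term with $\theta\neq\theta_0$, leaving $\phi(\widehat{X}_S,\theta_0)=-\bigl(\prod_{\mu\neq\theta_0}(\theta_0-\mu)\bigr)\,z^{T}E_{\theta_0}z$. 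The product is nonzero by simplicity of the spectrum, and $z^{T}E_{\theta_0}z\neq 0$ precisely because controllability of $(X,S)$ forces $E_{\theta_0}z\neq 0$ for every eigenvalue; hence no root of $\phi(X,t)$ is a root of $\phi(\widehat{X}_S,t)$, so the two polynomials are coprime. Equivalently, $(\widehat{X}_S,\{0\})$ is a controllable pair.

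Next I would propagate coprimality along the pendant path. Attaching the pendant vertex $p_j$ at $p_{j-1}$ gives $\phi(Y_j,t)=t\,\phi(Y_{j-1},t)-\phi(Y_{j-1}\backslash p_{j-1},t)=t\,\phi(Y_{j-1},t)-\phi(Y_{j-2},t)$, since deleting the current free end of the path from $Y_{j-1}$ returns $Y_{j-2}$ — with the convention $Y_{-1}=X$, which also handles $j=1$. This three-term recursion is exactly one step of the Euclidean algorithm: a common divisor of $\phi(Y_j,t)$ and $\phi(Y_{j-1},t)$ divides $\phi(Y_{j-2},t)$ and conversely, so $\gcd(\phi(Y_j,t),\phi(Y_{j-1},t))$ is independent of $j$ and equals $\gcd(\phi(\widehat{X}_S,t),\phi(X,t))=1$ by the base case. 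Taking $j=k$ and noting $Y\backslash p_k=Y_{k-1}$, the polynomials $\phi(Y,t)$ and $\phi(Y\backslash p_k,t)$ are coprime, so $(Y,\{p_k\})$ is controllable by the characterization recalled above.

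The only step that needs real care is the base case, because the preceding theorem's cone is over a single vertex and one cannot simply quote the identity $\phi(\widehat{X}_1,t)=t\phi(X,t)-\phi(X\backslash1,t)$; the spectral bookkeeping above is the substitute for a cone over a set. After that, the pendant-path recursion and the invariance of the gcd are routine. Alternatively, once $(\widehat{X}_S,\{0\})$ is shown to be a controllable pair, one may finish by applying the preceding theorem $k$ times, since $Y_j=\widehat{(Y_{j-1})}_{\{p_{j-1}\}}$.
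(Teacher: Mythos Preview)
The paper gives no explicit proof here, only the remark that ``a method based on the theorem'' yields the corollary. Your argument is the natural execution of that hint and is correct: you establish the set-cone base case $(\widehat{X}_S,\{0\})$ via the Schur-complement identity $\phi(\widehat{X}_S,t)=t\,\phi(X,t)-z^{T}\mathrm{adj}(tI-A)z$ together with the simple-spectrum spectral decomposition (both forced by controllability of $(X,S)$, since $\mathrm{rk}\,W_z(X)\le$ the number of distinct eigenvalues $\theta$ with $E_\theta z\neq 0$), and then propagate along the pendant path either by iterating the theorem directly or, equivalently, by the Euclidean-algorithm observation on the three-term recursion $\phi(Y_j,t)=t\,\phi(Y_{j-1},t)-\phi(Y_{j-2},t)$.

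You are also right to flag the statement itself. Under the paper's own definition, ``$Y$ is controllable'' means $(Y,\mathbf{1})$ is a controllable pair; but taking $X=P_m$ with $S$ an end-vertex (a controllable pair by Corollary~\ref{cor}) makes $Y$ a longer path, and a path carries a nontrivial automorphism fixing $\mathbf{1}$, so it is never a controllable graph. The only reading that survives is yours, that $(Y,\{p_k\})$ is controllable --- and your base-case computation is genuinely needed, since the single-vertex identity $\phi(\widehat{X}_1,t)=t\,\phi(X,t)-\phi(X\backslash 1,t)$ quoted in the theorem does not extend verbatim to a cone over a set $S$.
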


\section{Conclusions}

We have considered controllability and QWs. As a technical tool, we have
introduced the combinatorial notion of a controllable pair. A graph and a
subset of its vertices form a controllable pair, when the structure of the
graph exhibit a certain type of disorder. The disorder is expressed by the
cycle structure of the graph, encoded in the entries of powers of the
adjacency matrix. We have proved that a QW involving a such a pair renders a
system controllable. By this result, we can \emph{in principle }perform
universal quantum computation as an alternating sequence of QWs on two graphs,
or on the same graph, but interspersed with phase factors. Including fault
tolerance in this picture would encounter hard obstacles because of the
sensitivity to phenomena linked to decoherence and Anderson localization
\cite{k}. An issue related to the more abstract aspects is the lack of
transparency when trying to design algorithms with a logic that requires
operations on specific subsystems. \ We conclude by stating four problems:
\emph{Problem 1: }Let $G$ be a subgroup of $U(n)$ which fixes $|\psi\rangle
\in\mathbb{C}^{n}$ ($n\geq3$) and let $V\in U(n)-G$. Then, the group
$\left\langle G,V\right\rangle $ (\emph{i.e.}, the subgroup of $U(n)$
generated by $G$ and $V$) is dense in $U(n)$ (see, \emph{e.g.}, Lemma 20 in
\cite{ah}). Is there an analogue to this statement for dynamical Lie algebras
generated by adjacency matrices? In particular, let $z$ be the characteristic
vector of $S\subseteq V(X)$ and let $P$ be a permutation matrix corresponding
to an automorphism of $X$. When $Pz=z$, it follows that $PW_{S}(X)=W_{S}(X)$
and then $P=I$ because $\det(W_{S}(X))\neq0$. This means that the
authomorphisms of $X$ fixing $S$ are trivial if $\left(  X,S\right)  $ is
controllable. Is this the most general condition for controllability?
\emph{Problem 2: }Can we lift the combinatorial criterion for controllability
introduced in this paper to general criteria for controllability of spin
systems? \emph{Problem 3:} Study controllability by adjacency matrices, when
the time of application of each Hamiltonian is constrained. Determining
relations between quantum control by nearest-neighbour interaction on graphs
and classical simulatability of the associated dynamics is an open problem.
\emph{Problem 4. }What can be said about controllability by acting only on a
connected induced subgraph? If the number of vertices is constrained, the
optimum may be difficult to compute. This would be parallel to the
Burgarth-Giovannetti criterion whose optimum is difficult even to approximate
\cite{aazami}.

\emph{Acknowledgments. }The authors would like to thank Daniel Burgarth and
Alastair Kay for very valuable conversation about quantum control theory and
for reading earlier drafts; Domenico D'Alessandro for several important
remarks; Alessandro Cosentino for drawing the graphs in Fig. 1; the anonymous
referees for suggestions that helped to improve the paper. SS is supported by
a Newton International Fellowship.

\end{document}